\renewcommand{\refeq}[1]{(\ref{#1})}
\theoremstyle{plain}
\newtheorem{theorem}{Theorem}
\begin{document}

% Use the \preprint command to place your local institutional report
% number in the upper righthand corner of the title page in preprint mode.
% Multiple \preprint commands are allowed.
% Use the 'preprintnumbers' class option to override journal defaults
% to display numbers if necessary
%\preprint{}

%Title of paper
\title{Partial synchronization phenomena in  networks\\ 
of identical oscillators with non-linear coupling}

% repeat the \author .. \affiliation  etc. as needed
% \email, \thanks, \homepage, \altaffiliation all apply to the current
% author. Explanatory text should go in the []'s, actual e-mail
% address or url should go in the {}'s for \email and \homepage.
% Please use the appropriate macro foreach each type of information

% \affiliation command applies to all authors since the last
% \affiliation command. The \affiliation command should follow the
% other information
% \affiliation can be followed by \email, \homepage, \thanks as well.
\author{Celso Freitas}
\email[]{cbnfreitas@gmail.com}

\author{Elbert Macau}
\email[]{elbert.macau@inpe.br}

\affiliation{Associate Laboratory for Computing and Applied Mathematics - 
LAC, Brazilian National Institute for Space Research - INPE, Brazil}

\author{Arkady Pikovsky}
\email[]{pikovsky@uni-potsdam.de}
\affiliation{Department of Physics and Astronomy, University of Potsdam,
Germany}
\affiliation{Department of Control Theory, Nizhni Novgorod State University,
Gagarin Av. 23, 606950,
Nizhni Novgorod, Russia}

%Collaboration name if desired (requires use of superscriptaddress
%option in \documentclass). \noaffiliation is required (may also be
%used with the \author command).
%\collaboration can be followed by \email, \homepage, \thanks as well.
%\collaboration{}
%\noaffiliation

\date{\today}

\begin{abstract}

We study a Kuramoto-like model of coupled identical phase oscillators on a network, where attractive and repulsive couplings are balanced dynamically due to nonlinearity in interaction. Under a week force, an oscillator tends to follow the phase of its neighbors, but if an oscillator is compelled to follow its peers by a sufficient large number of cohesive neighbors, then it actually starts to act in the opposite manner, i.e. in anti-phase with the majority. Analytic results yield that if the repulsion parameter is small enough in comparison with the degree of the maximum hub, then the full synchronization state is locally stable. Numerical experiments are performed to explore the model beyond this threshold, where the overall cohesion is lost. We report in detail partially synchronous dynamical regimes, like stationary phase-locking, multistability, periodic and chaotic states. Via statistical analysis of different network organizations like tree, scale-free, and random ones, and different network sizes, we found a measure allowing one to predict relative abundance of partially synchronous stationary states in comparison to time-dependent ones. 
\end{abstract}

% insert suggested PACS numbers in braces on next line
\pacs{}
% insert suggested keywords - APS authors don't need to do this
\keywords{Synchronization, complex networks, Kuramoto model}

%\maketitle must follow title, authors, abstract, \pacs, and \keywords
\maketitle

\textbf{Lead paragraph: Large population of coupled oscillators synchronizes
if the coupling is attractive and desynchronizes if it repulsive. However,
the nature of coupling may depend on the strength of the  compelling field:
if the force on the oscillator becomes strong, it can switch its behavior from
a ``conformist'' to a ``contrarian''. We study such a population on a network.
Here the oscillators connected to many others become contrarians first, so that
synchrony breaks. We show that the regimes of appearing partial synchrony can be 
rather complex, with a large degree of multistability, and suggest a network 
measure which allows to predict relative abundance of static and dynamic 
regimes.}

\section{Introduction}

In a seminal work \cite{kuramoto75}, aiming to understand synchronization
phenomena, Kuramoto proposed a mathematical model of non-identical, nonlinear
phase-oscillators, mutually coupled via a common mean field. 
Studying this system, he identified a synchronization transition to an 
oscillating global mode when the coupling strength is larger than a 
critical value, which is proportional to the range of the distribution 
of the natural frequencies. Over the time, subsequent outcomes based on Kuramoto
propositions have shown that his approach can be used as a framework 
to several natural and technological systems where an ordered behavior
(synchronization) emerges from the interactions of a large number of dynamical
agents
\cite{Acebron-etal-05,Strogatz2000}.
Furthermore, works have shown that the Kuramoto model can be exploited as a
building block to develop highly efficient strategies to process information
\cite{Follmann2014,Vassilieva2011}.

Recently, generalizations of the Kuramoto model toward interconnections
between the elements more complex than the mean field one, have received
considerable attention. 
Indeed, in many real-world problems, each dynamical agent interacts with a subset of the whole ensemble \cite{Dorfler2013,Leonard2014,Sadilek2014}, which can better described using a network topology. A variety of studies have analyzed the onset of the synchronization regime in this context. For a general class of linearly coupled identical oscillators, the Master Stability Function, originally proposed by Pecora and Carrol~\cite{Pecora1998}, allows one to determine an interval of coupling strength values that yields complete synchronization, as a function of the eigenvalues of Laplacian matrix of the coupling graph. For networks of oscillators with non-identical natural frequencies, Jadbabaie et al. \cite{Jad04} were able to give similar bounds for the coupling strength of the Kuramoto model without the assumption of infinitely many phase-oscillators. 
Among related works, Ref. \cite{Franci2010} deals with a model whose natural frequency oscillators change with time, even when they are isolated. Ref. \cite{Papachristodoulou2010} explores the effects of delay in the communication between oscillators. Besides, Ref. \cite{Pecora2013} builds a bridge between graph symmetry and cluster synchronization.

Taking into consideration all of these previous results, one can roughly state
that the Kuramoto transition to synchronization always happens if the coupling between
oscillators is attractive; while, when it changes to be repulsive, this
synchronization state is absent ~\cite{Petar2014,Tsimring2005}. However,
the structure of the coupling can nontrivially depend on the level of synchrony itself.
Such a dependence, called nonlinear coupling scheme, has been explored in
recent theoretical~\cite{Filatrella-Pedersen-Wiesenfeld-07,Rosenblum-Pikovsky-07,pik2009,Baibolatov2009}
and experimental~\cite{Temirbayev_etal-12,Temirbayev_etal-13} studies dealing with
setup of global coupling. The main effect here is the partial synchrony,
which establishes at moderate coupling strengths, where the coupling is
balanced between the attractive and repulsive one.

Here, we consider the effects of the non-linear
coupling on a network:  a set of
identical oscillators, which communicate via a connected simple coupling graph.
each element is forced by a (local) mean field, which
encompasses the oscillators that are connected to it.
The coupling function is tailored so that its influence is attractive, if the
local acting field is small, or repulsive, otherwise. 
This coupling strategy implies that only nodes with a large enough number
of connections may become repulsive. Thus, the hubs play a key role for the
ensemble dynamics.
A non-linear coupling parameter in the system tunes the critical quantity of
connections and how cohesive this mean field must be in order to allow this
transition.
So, our approach can be considered as a dynamical generalization of the
inhomogeneous populations of oscillators
consisting of  \textit{conformists} and
\textit{contrarians}~\cite{Hong-Strogatz-11}.
However, the type of oscillator's depends on the
force acting on it.

Overall dynamics in the model can be qualitatively understood as follows: 
Let us assume initially
that all the  mean fields are small. Then, there are only attractive
interactions (\textit{conformists}) in the system. So, in a first moment, they
start to mutually adjust their phases.
Above a threshold, the most connected oscillators start to feel a repulsive
effect that drives them away from the synchronous state. 
In other words, if an oscillator has a sufficiently large number of neighbors
and if it suffers enough cohesive pressure from them, instead of attractiveness,
it becomes a
\textit{contrarian}, wishing to be in anti-phase with the force.
Then, due to the 
repulsiveness of some nodes, other mean fields may also become smaller. 
Finally, this tendency can shift nodes to attractiveness again. As a
consequence, 
an intermediate configuration may emerge due to the balance these conflicting
tendencies in the system.

Depending on the non-linear coupling parameter, we report a variety of
qualitative dynamic behaviors. In general, for small values of the non-linear
coupling parameter, we observed full synchronization and phase-locked states.
When this parameter is increased, multistability, periodic and chaotic dynamics 
take place. 

The paper is organized as follows.
Initially, we discuss the basic details of the model in section~\ref{sModel}.
In 
section~\ref{sFull}, the analytical result about the stability of full
synchronization is presented.
Numerical experiments in section~\ref{sOthers} illustrate different possible 
regimes that the present model can display. Finally, in
section~\ref{sCorrelation}, we perform a numerical exploration to address the
correlation of stationary phase locking states with partial synchronization 
with the network parameters, by exploring different network topologies and
sizes.

\section{Model of oscillator network with nonlinear coupling} \label{sModel}

Let us consider a system of $N$ identical phase-oscillators represented by
$(\theta_1, \ldots, \theta_N) \in \TT^N$ 
coupled through a simple and connected undirected graph $A$.
The dynamics for the $i$-th oscillator is given by the 
following ordinary differential equation
\begin{equation} \label{e1}
\dot \theta_i =
\C{1 - \eps Z_i^2}  
   \Sum_{j \in \mathcal{N}_i} 
  \sin \C{\theta_j - \theta_i },
\end{equation}
where $\mathcal{N}_i$ denotes the set of neighbors of $i$ in the coupling graph
$A$. 
Equations~\refeq{e1} are formulated in the reference frame rotating with the
common
frequency of the oscillators, so that the latter one does not appear in the
equations. The time is 
normalized by the linear coupling strength. 
The main feature of our model \refeq{e1} is the \emph{nonlinear coupling
parameter}, 
$\eps \ge 0$, which modifies the coupling at each node, with $\eps=0$ a
standard 
setup of the Kuramoto model on a network is recovered. This parameter enters as
a 
coefficient at the square of \emph{the norm of the local order parameter}: 
\begin{equation}\label{eZi}
Z_i := \abs{\sum_{j \in \mathcal{N}_i} \ee^{\ii \theta_i}}.
\end{equation}
which measures the magnitude of the force acting on oscillator with index $i$. 
On the other hand, we represent the \emph{(global) order parameter} by 
\begin{equation}\label{eR}
R \ee^{\ii \psi} = \Frac{1}{N}  \sum_{i = 1}^N \ee^{\ii \theta_i},
\end{equation}
where $R \in \C[1]{0,1}$ is its norm and $\psi \in \C[01]{0,2\pi}$ is its 
phase.

We stress that $Z_i$ is not normalized (in the sense that there is no division
by 
the number of the terms in the summation, like in $R$), as 
it measures the total action of the neighbors on the $i$-th oscillator, which is
called \emph{local mean field}.  Simple calculations
show that
$Z_i^2 \in \C[1]{0,d_i^2}$, where $d_i$ denotes the degree of the $i$-th vertex,
that is, the number of incoming or outgoing connection, since the graph is
undirected. Thus, a necessary condition for a node 
to suffer repulsive coupling, i.e. $\C{1 - \eps  Z_i}^2 < 0$, is that $\eps >
d_i^{-2}$.

The introduced order parameters $R$ and $Z_i, \ldots, Z_n$ are maximal in the case of full synchronization $\theta_1 = \ldots = \theta_N$, whiles they decrease when oscillators begin to move apart from each other.

If  $\eps  Z_{max}^2 < 1$, where $Z_{\max}^2 := \max \C[2]{Z_1^2, \ldots,
Z_N^2}$, 
then all oscillator will attract each other  so that the full synchronization is
established. Next, if
$Z_{max}^2$ becomes larger than
$(\eps)^{-1}$, the corresponding oscillator begins to be repulsive related to
its 
local mean field, and the full synchronization  breaks. As a result, $Z_{max}^2$
may decrease 
and switch again the node to be attractive. Depending on the coupling graph $A$,
on the initial conditions
$\C{\theta_1^0, \ldots,\theta_1^N}$, and on the intensity of the nonlinear 
coupling parameter $\eps$, numerical simulation reveals that model \refeq{e1} 
can exhibit different qualitative behaviors.

If the largest degree in the coupling graph 
satisfies $\eps < d_{\max}^{-2}$, 
with $d_{\max} := \max \C[2]{d_1, \ldots, d_N}$, then no node can be repulsive.
We demonstrate 
in the next section via the Lyapunov analyses, that in fact this condition 
guarantees that the full synchronized state is stable.

\section{Stability of full synchronization} \label{sFull}

The basic steps to obtain the results in this section follow the main ideas from
\cite{Jad04}. We begin presenting some preliminary concepts, including
elements of the graph theory needed, and a generalized norm of the order
parameter to define our Lyapunov function.

Let $B$ be the \emph{directed incidence matrix} of a graph $A$. Thus, $B$ is a
matrix with $N$ rows and $E$ columns, where $E$ is the number of \emph{directed
edges} of the matrix. The number of \emph{undirected edges}, \ie ignoring the
direction, equals is $E/2$. The columns of $B$ represent the edges of the graph:
if the $k$-th arrow (directed edge) of the graph goes from $i$ to $j$, then the
$k$-th column of $B$ is zero, except at positions $i$ and $j$, where  $B_{ik} =
1$ and $B_{jk} = -1$.
Regarding the dynamics of the system, an arrow from 
node $i$ to node $j$ in the graph means that node $i$ influences node $j$.
Although
the directed incidence matrix is generally defined for directed graphs, it must
be emphasized that only 
undirected graphs are considered here. We abuse terminology and 
identify a graph $A$ with its \emph{adjacency matrix}, which is an $N \times N$
matrix 
where $A_{ii} = 0$; $A_{ij} = A_{ji} = 1$, if there is and edge between nodes
$i,j$; 
and $A_{ij} = A_{ji} = 0$, otherwise. So, $E=\sum_{i,j=1}^N A_{ij}$. 
Another common characterization of a graph is  its \emph{Laplacian matrix}, 
$L := \diag \C{d_1, \ldots, d_N} - A$. One can check that
$L = \nicefrac{1}{2} B B^{\top}$. A simple illustration of these concepts is 
given at Fig. \ref{fig:grafoExemplo}.

\begin{figure}
\centering
\includegraphics[scale=1]{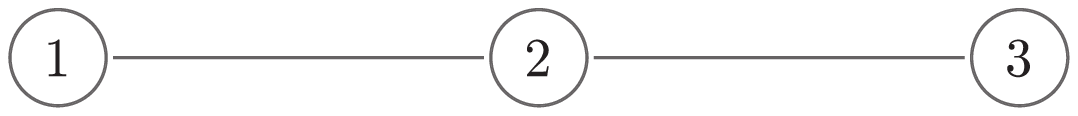}
\footnotesize
\[
{
B_1 = 
\C{ \hspace{-0.25cm} 
\
\begin{array}{rrrrrrr}
1 & -1 & 0  & 0   \\
-1 & 1 & 1  & -1  \\
0 & 0 & -1  & 1 
\end{array}
\hspace{-0.10cm} 
}
\qquad
L_1 = 
\C{ \hspace{-0.25cm} 
\
\begin{array}{rrrrrrr}
1 & -1 & 0  \\
-1 & 2 & -1 \\
0 & -1  &  1 
\end{array}
\hspace{-0.10cm} 
}
}
\]
\caption{Example of graph with $N=3$ and $E=4$, its directed incidence matrix
$B_1$ and its Laplacian matrix $L_1$.}
\label{fig:grafoExemplo}
\end{figure}
The usage of the directed incidence matrix allows us to rewrite model \refeq{e1}
in a vector form as follows:
\begin{equation} \label{e1alt}
\dot \theta
=-\Frac{1}{2} 
 \diag \C{ 1_{N} - \eps Z^2 }
B \sin \C{B^\top \theta},
\end{equation}
where $Z^2 := \C{Z_1^2, \ldots, Z_N^2}$, $1_N := \C{1, \dots, 1} \in \RR^N$ and
$\diag \C{.}$ stands for the matrix with the elements of a vector on the leading
diagonal, and $0$ elsewhere.

The square of the global order parameter can be expressed as 
\[
R^2 
= \Frac{1}{N^2} 
\C{
N
+
2\Sum_{j<k} \cos \C{ \theta_j -\theta_k }
}.
\]
However, to build our Lyapunov function,
we define a \emph{generalized norm of  order} $r$ as

\begin{equation} \label{eGenr}
r^2  :=
1 - \Frac{E - 1_E^\top 
\cos \C{B^\top \theta}  }{N^2}.
\end{equation}

Note that $R^2$ requires the sum of \emph{all} $\cos \C{\theta_j - \theta_k}$ 
with $j<k$ (for $j,k =1, \ldots, N$), but its generalization $r^2$ takes the 
sum ($1_E^\top \cos \C{B^\top \theta}$) \emph{only} through the edges of the
graph. 
In the case of full coupling graph, direct substitution yields that both global
and generalized norm of the order 
parameter have the same expression .

For any connected symmetrical coupling graph, one can check that the maximum of 
$r^2$ is the unit, and that $R^2=1$  if and only if this value is 
achieved~\footnote{On the other hand, the minimum  value of $r^2$ 
does not necessarily correspond to $R^2=0$.}.

%\subsection{A Lyapunov analysis} \label{subLyapunov}
Let
\begin{equation}\label{eU1}
U \C{\theta} = 1 - r^2
\end{equation}
be a candidate Lyapunov function. It is clear that the minimum value of 
$U \C{\theta} = 0$ corresponds to the maximum value of $r^2 = 1$, which is 
equivalent to the fully synchronized state.

In fact, algebraic manipulations reveal that
\begin{equation} \label{eU1alt}
U \C{\theta} = \frac{2}{N^2} \norm{ \sin 
\C{ \frac{ B^{\top} \theta }{2}  } 
}^2,
\end{equation}
and that the differential of $U$ is given by
\begin{equation} \label{eDU1}
\D U =
\Frac{1}{N^2} \C{ B \sin \C{B^\top \theta}}^\top.
\end{equation}

As a result, we synthesize  in the next theorem the previous intuitively stated
idea that if $\eps$ is 
small enough then full synchronization is a robust phenomenon related to small 
perturbations over initial conditions. 

\begin{theorem} \label{teo1} In Model \refeq{e1}, if
$\eps$ is smaller than a critical value $\eps_c :=
\nicefrac{1}{d_{\tiny{\mbox{max}}}^2}$,
 then the
synchronized stated ($R=1$) is Lyapunov stable.
\end{theorem}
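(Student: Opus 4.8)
The plan is to verify that the candidate $U$ introduced in \refeq{eU1} is a genuine Lyapunov function for the synchronized state, and then to show that its derivative along the flow of \refeq{e1alt} is non-positive exactly when $\eps<\eps_c$. The positivity half is already supplied by the material preceding the theorem: by \refeq{eU1alt} we have $U(\theta)\ge 0$, and the discussion of $r^2$ establishes that $U(\theta)=0$ if and only if $r^2=1$, i.e.\ precisely at full synchronization. It therefore remains to control $\dot U$ in a neighborhood of the synchronized configuration.

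Next I would compute $\dot U$ by the chain rule, $\dot U=\D U\cdot\dot\theta$, substituting the differential \refeq{eDU1} and the vector field \refeq{e1alt}. Abbreviating $w:=B\sin(B^\top\theta)\in\RR^N$, the differential is $\D U=N^{-2}w^\top$ and the dynamics read $\dot\theta=-\tfrac12\diag(1_N-\eps Z^2)\,w$. Multiplying these produces the quadratic form
\[
\dot U=-\frac{1}{2N^2}\,w^\top\diag\!\left(1_N-\eps Z^2\right) w
=-\frac{1}{2N^2}\sum_{i=1}^N\left(1-\eps Z_i^2\right)w_i^2 .
\]
Since the matrix is diagonal, the sign of $\dot U$ is governed entirely by the coefficients $1-\eps Z_i^2$.

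The decisive step is then the sign estimate. Recall from Section~\ref{sModel} that $Z_i^2\in[0,d_i^2]$, so $Z_i^2\le d_{\max}^2$ for every $i$. Under the hypothesis $\eps<\eps_c=d_{\max}^{-2}$ we get $1-\eps Z_i^2\ge 1-\eps d_{\max}^2>0$ for all $i$; hence $\diag(1_N-\eps Z^2)$ is positive definite and each term of the sum is non-negative. This gives $\dot U\le 0$, with equality only where $w=B\sin(B^\top\theta)=0$. Combined with $U\ge 0$ and $U=0$ on the synchronized state, the standard Lyapunov theorem then yields the asserted stability.

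The one point requiring care, and the step I expect to be the main obstacle, is the degeneracy of the synchronized state. Because $B^\top 1_N=0$, both $U$ and the vector field are invariant under a uniform phase shift $\theta\mapsto\theta+c\,1_N$, so full synchronization is not an isolated equilibrium but a one-dimensional manifold; consequently $U$ is only positive definite \emph{transverse} to it, and the theorem must be understood as Lyapunov stability of that set. To make this rigorous I would restrict to a small neighborhood in which, by connectivity of $A$, the only zeros of $\sin(B^\top\theta/2)$ are the synchronized points, check that synchronization is indeed an equilibrium of \refeq{e1alt} (immediate, as $\sin(B^\top\theta)=0$ there), and quotient out the phase-shift symmetry so that $U$ becomes a strict local minimum on the reduced space. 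Applying the standard Lyapunov argument there completes the proof.
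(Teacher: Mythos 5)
Your proposal is correct and follows essentially the same route as the paper: the same Lyapunov function $U=1-r^2$, the same computation of $\dot U$ as a quadratic form in $B\sin(B^\top\theta)$, and the same key bound $1-\eps Z_i^2\ge 1-\eps d_{\max}^2>0$ from $Z_i^2\le d_i^2\le d_{\max}^2$. The only differences are cosmetic: you omit the paper's additional lower bound on $\|B\sin(B^\top\theta)\|^2$ via the algebraic connectivity $\lambda_2(L)$ (which is not needed for the bare stability claim), and you are more explicit than the paper about the phase-shift degeneracy of the synchronized set.
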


\begin{proof} Consider the potential field $U \C{\theta}$ defined in
\refeq{eU1}. So, 
using the vector form of the model \refeq{e1alt} and the expression of the
differential 
$\D U$ from \refeq{eDU1}, we have that
$\dd*{ U \C{\theta \C{t}} }$ equals to
\[
 -\Frac{1}{2 N^2} \C{\sin \C{B^\top \theta} }^{\top} B^{\top}
 \diag \C{ 1_{N} - \eps Z^2 } 
B \sin \C{B^\top \theta},
\]
If we set $x := B \sin \C{B^\top \theta}$,  then we have that
$x^\top \diag \C{ 1_{N} - \eps Z^2 } x$ is larger or equal than $\C{1 - \eps
d^2_{\max}}
 \norm{x}^2$. Moreover, we can also define a lower bound for $\norm{x}^2$, since
$\norm{x}^2 = \sin \C{B^\top \theta}^\top B^\top B \sin \C{B^\top \theta} \ge 
 \lambda_2 \C{  B^\top B} \norm{\sin \C{B^\top \theta}}^2
 = 2
 \lambda_2 \C{L} \norm{\sin \C{B^\top \theta}}^2$; where  $\lambda_2 \C{L}$ is 
the algebraic connectivity of the graph. In the last inequality we used that
$\nicefrac{1}{2} B B^\top = L$ and that both matrices $B B^\top$ and $B^\top B$
have the same 
non-trivial eigenvalues $0 \le \lambda_2 \C{L} < \ldots < \lambda_N \C{L}$,
where $\lambda_2 
\C{L}$ is strictly larger than zero because the coupling graph $A$ is connected
\cite{Godsil01}. 
Therefore,
\[
\dd*{ U \C{\theta \C{t}} } \le -\Frac{1}{N^2}
\lambda_2 \C{L}
\C{1 - \eps d^2_{\max}}
\norm{\sin \C{B^\top \theta}}^2
\]
As a result, 
$\eps < \eps_c :=  \nicefrac{1}{d_{\tiny{\mbox{max}}}^2}$ implies that 
$\dd*{ U \C{\theta \C{t}} } \le 0$, then the fully synchronized state $R=1$ is
stable.
\end{proof}

\section{Dynamics of partially synchronous states} \label{sOthers}
In this section numerical simulations are performed to illustrate the rich
repertoire of behaviors that model \refeq{e1} may exhibit, specially beyond the 
threshold $\eps > \eps_c$, where Theorem \ref{teo1} cannot be applied. 

\subsection{Quantification of dynamical regimes} \label{sMetrics}

The numerical integration scheme applied is a forth order 
Adams-Bashforth-Moulton Method (see \cite{burden}) with discretization time step
 $h=0.01$. 
We calculate the \emph{partial synchronization metric} \textsl{s} from 
Ref.~\cite{Gardenes}, which, for every two oscillators $i,j$ in the network,  
takes values $s_{ij} \C{I} \in \C[1]{0,1}$, indicating how much the 
\emph{mean phase difference} between $\theta_i$ and $\theta_j$ varies in the 
time interval $I := \C[1]{t_1,t_2}$, with $t_1 < t_2$. This metric is defined as
\[
s_{ij} \C{I}
\doteq
\norm{
      \Frac{1}{t_2 - t_1} 
      \Int[t_1][t_2]{
      \ee^{\ii \C[0]{\theta_i \C{t} - \theta_j \C{t}}}
      }{t}
      }.
\]
One can  check that if  $\theta_i \C{t} \equiv \theta_j \C{t} + \eta$ for some 
constant $\eta$, then the exponent in the previous integral is constant and
$s_{ij} \C{I}  = 1$. 
Nevertheless, if $ \theta_i \C{t} - \theta_j \C{t} \modulo 2\pi$ assumes every
possible value 
over the unit circumference with not clear trend, then $s_{ij} \C{I}$ is close
to zero. 
Now, we average contributions of all neighbor oscillators $i,j$ under a graph
$A$ with $N$ nodes to write 
\[
	s \C{I} \doteq \Frac{1}{E} 
	\Sum_{i,j=1}^N A_{ij} s_{ij} \C{I},
\]
where $E$ is the quantity of undirected edges in the graph. 

To exclude transients and to detect the statistically stationary state, we
adopted 
the following procedure. For all experiments the time interval $\C[1]{0, 2 .
10^3}$ 
is always considered as transient time. Then, the numerical integration is
performed 
in the subsequent intervals $I_k := \C[1]{(k-1), k }10^3$, with $k \ge 3$,
until 
the first $ \tilde k = k$ such that
$\abs{ s \C{I_{\tilde k-1}} - s  \C{I_{\tilde  k}} } < 0.01$, or $\tilde k=10$
is achieved.
Only such a time interval $I_{\tilde {k}}$ is regarded as non-transient.
For the subsequent analysis, we use values of the phases $\theta \C{t}$ in the 
stationary time interval regime $I_{\tilde {k}}$ (whose beginning is shifted to
$t=0$ 
without loss of generality) at points
 $t \in \tilde I := \C[2]{i h, i \in \C[2]{0,1, \ldots, 10^5-1,10^5}}$.

\subsection{Examples of complex behaviors} \label{sExamples}

As it was claimed before, in dependence on the network structure, very
different types of the dynamics are possible. In order to give impression on it,
we present simulations of 
model \refeq{e1} with two different coupling graphs displayed as 
inserts in Fig. \ref{fig:1}. Both networks have $N=10$ nodes
and they differ only by the rewiring of a single edge. We performed simulations
for 
$10$ random initial conditions 
chosen with uniform distribution over $\C[1]{0,2\pi}$ for each experiment. For 
all these initial conditions $l = 1, \ldots, 10$, the norm of the order
parameter 
$R^l \C{t}$, according to \refeq{eR}, is computed from the time series. 
As explained in the previous section, in these calculations a transient time is
eliminated 
and that a statistically
stationary regime $\tilde I$ of $10^3$ units of time and $\#\tilde I := 10^5 +
1$ 
points is considered. Then, also for each distinct initial condition, the 
maximum, average and minimum values of the associated norm of the order
parameter are 
computed, respectively denoted by
$
R^l_{\max} : = \max_{t \in \tilde I}  R^l\C{t} 
$;
$
\C[<]{R^l} : = \C{\#\tilde I}^{-1} \sum_{t \in \tilde I} R^l\C{t} 
$; and
$
R^l_{\min} : = \min_{t \in \tilde I}  R^l\C{t} 
$. Of course, $R^l\C{t}$ converges to a constant
if and only if  $R^l_{\max} = \C[<]{R^l} = R^l_{\min}$.
Now, having different simulations for a 
fixed coupling graph, we evaluated the maximum, average and minimum value 
of the \emph{average} value of the norm of the order parameters over this
ensemble, 
respectively denoted by
$
\max \C[2]{ \C[<]{R} }: = 
\max_{l =1, \ldots, 10} \C[<]{R^l}
$;
$
\mbox{mean} \C[2]{ \C[<]{R}} : = \C{10}^{-1} \sum_{l=1, \ldots, 10} \C[<]{R^l}
$; and
$
\min \C[2]{ \C[<]{R} } : = 
\min_{l =1, \ldots, 10} \C[<]{R^l}
$. So, if the norm of the order parameter converges to the same value for
\emph{all} initial 
conditions simulated, then 
$\max \C[2]{ \C[<]{R}} =\mbox{mean} \C[2]{ \C[<]{R}} = \min \C[2]{ \C[<]{R} }$.
For the cases where the norm of the order parameter does not converge over all
initial 
conditions, it will be useful to examine the  \emph{overall maximum} and
\emph{overall minimum}
values of the norm of the order parameter, 
 respectively denoted by
$
\max \C[2]{ R_{\max} }: = 
\max_{l =1, \ldots, 10} R^l_{\max}
$;
and
$
\min \C[2]{ R_{\min} }: = 
\min_{l =1, \ldots, 10} R^l_{\min}
$. Thus, if there is no fixed phase synchronization for all the initial
conditions 
simulated, but the norm of the order parameter presents only small deviations
around a common  
value, then the gap between $\max \C[2]{ R_{\max} }$ and  $\min \C[2]{ R_{\min}
}$ is also small. 
Also notice that
$\min \C[2]{ R_{\min} } \le \min \C[2]{ \C[<]{R} } \le \mbox{mean} 
\C[2]{ \C[<]{R}} \le \max \C[2]{ \C[<]{R} } \le \max \C[2]{ R_{\max} }$, since
$R^l_{\min} 
\le \C[<]{R^l} \le R^l_{\max}$ for all initial conditions.
Finally, the maximum Lyapunov exponent $\lambda^l_{\max}$ for each initial
condition is also 
computed, according to the algorithm in \cite{alligood1997chaos}. 
The maximum Lyapunov exponent over all the chosen initial conditions 
$\lambda^l_{\max}:= \max_{l =1, \ldots, 10}  \lambda^l_{\max}$ is also analyzed.

\begin{figure}
\centering
\includegraphics[scale=1]{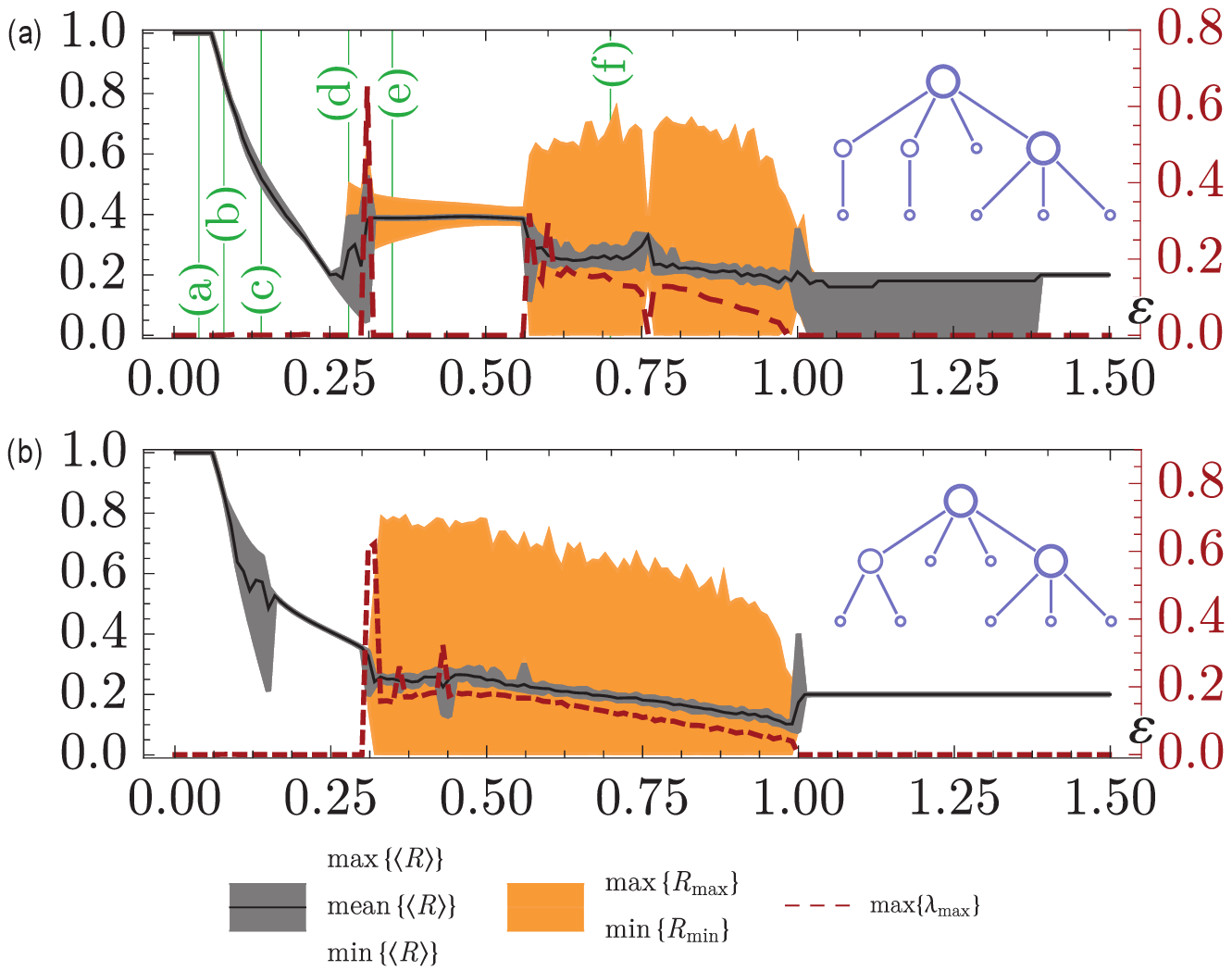}
\caption{Numerical results for Model \refeq{e1} as a function of $\eps$, for the
coupling 
graphs despited as insect, including 10 random initial conditions.  
A black line corresponds to  $\mbox{mean} \C[2]{\C[<]{R}}$, while the interval
between 
$\min \C[2]{\C[<]{R}}$ and $\max \C[2]{\C[<]{R}}$ is shown as a gray strip.
The gap between $\min \C[2]{R_{\min}}$ and  $\max \C[2]{ R_{\max}}$ is shown as
an orange strip.
Since the orange strip is by construction larger or equal than the gray one, the
first one is 
not displayed in the figure when they coincide.
Left vertical axes show values related to norm of the order parameter, while
the 
right ones represents the maximum Lyapunov exponent $\lambda_{\max}$, shown as a
red dashed line. 
Letters in green vertical lines from the upper experiment correspond to 
subfigures in Fig. \ref{fig:2}.
}
\label{fig:1}
\end{figure}

\begin{figure}
\centering
\includegraphics[scale=1]{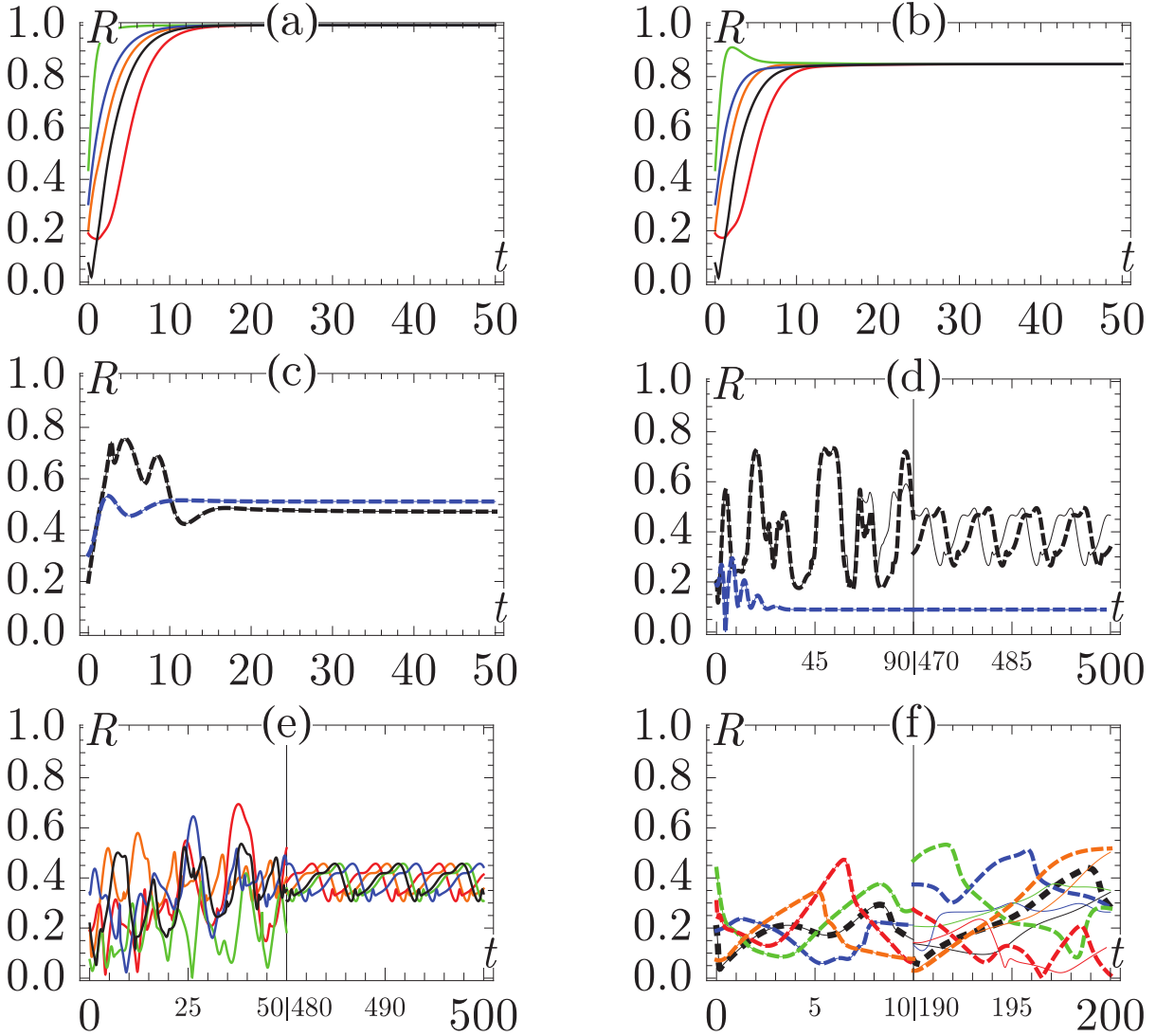}
\caption{Evolution of $R\C{t}$ for different values of $\eps$ indicated in
green 
at the upper experiment from Fig. \ref{fig:1}. Every color represents a
different 
initial condition, while pairs of solid/dashed lines with the same color
correspond 
to solutions whose initial conditions differ not more than  $10^{-4}$ at each 
coordinate. (a) $\eps=0.04$: full synchronization; 
(b) $\eps=0.08$: fixed phase synchronization;
(c,d,e) $\eps=0.15,0.28,0.35$ respect.: examples of multi stability; 
(f) $\eps=0.70$: example with $\lambda_{\max}>0$. 
}
\label{fig:2}
\end{figure}

We now describe different regimes observed in the networks, using also
 Fig. \ref{fig:2}, where we depict time series of $R\C{ \theta \C{t}}$ for some
particular choices of $\eps$, 
indicated as green letters in the upper  panel from Fig. \ref{fig:1} (this is
the case we choose for illustrating
different regimes). 
Notice that $d_{\max}=4$ in both cases, so Theorem \ref{teo1} guarantees 
that for $\eps < \eps_c = 1/4^2 = 0.0625$ the full synchronization state, $R \ra
1$,
is locally stable as illustrated in Fig. \ref{fig:2} (a) (with $\eps=0.04$). 

Panel (a) in Fig.~\ref{fig:2} illustrates full synchronization in the network
for $\eps < \eps_c$.
For $\eps$ slightly bigger than $\eps_c$, simulations suggest that a stationary
regime
of partial phase synchronization, where
$R \ra c < 1$, is locally stable as shown in Fig.~\ref{fig:2}(b) ($\eps=0.08$). 
Details of this state are clear from Fig.~\ref{fig:detailB}. There we show the
that the synchronization between
the individual oscillators is complete if measured by quantity $s_{ij}$, and all
the oscillators have
the same frequency. However,  the oscillators are split into two groups with a
constant phase shift
between them; this division originates 
in the edge which connects the two largest hubs in the network (vertexes $1,8$).

\begin{figure}
\centering
\includegraphics[scale=1]{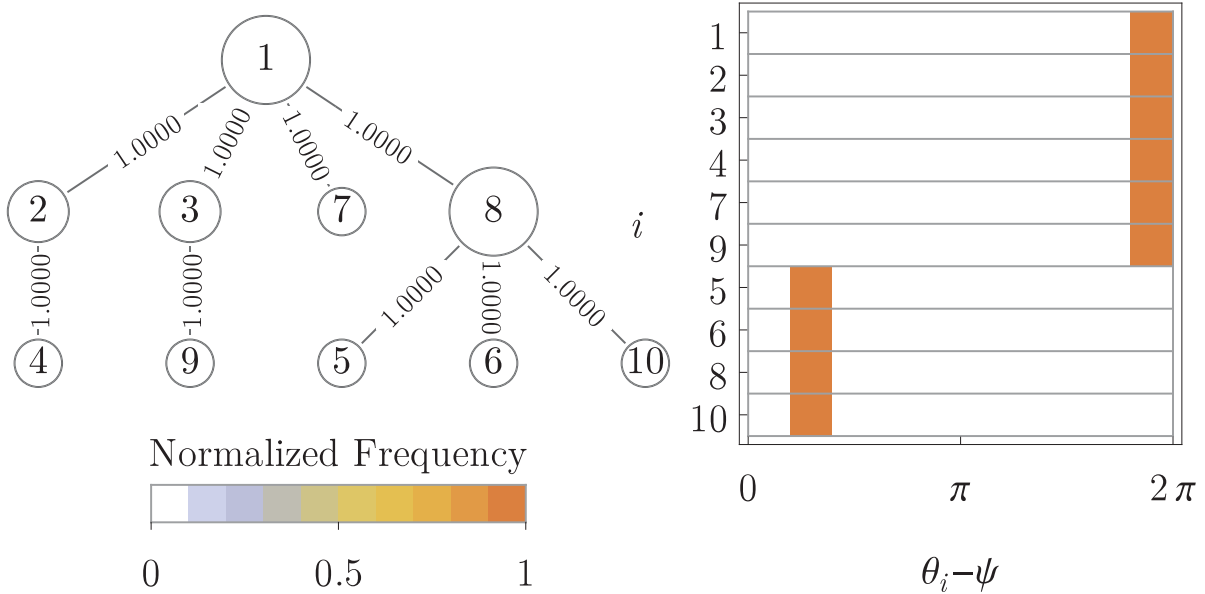}
\caption{Example of group formation.
Details of one of the trajectories from Fig.  
\ref{fig:2} (b) $\eps=0.08$. On the left side, 
the coupling graph with $s\C{i,j}$ in its edges is shown. On the right side, a
histogram of $\theta_i - \psi$ is presented with color code representing the
normalized frequency.
}
\label{fig:detailB}
\end{figure}

For larger values of $\eps$, the regimes are still static but with
multistability. 
For instance, at $\eps=0.15$  (see Fig. \ref{fig:2} (c))
two stable configurations emerge with $R \ra c$, with $c \approx 0.471$ (black)
or $c \approx 0.511$ (blue), 
depending on the initial condition. Fig.~\ref{fig:detailC}, which is analogous
to Fig.~\ref{fig:detailB},
 shows the  
existence of three subgroups, whose members may vary according to the initial
condition.

\begin{figure}
\centering
\includegraphics[scale=1]{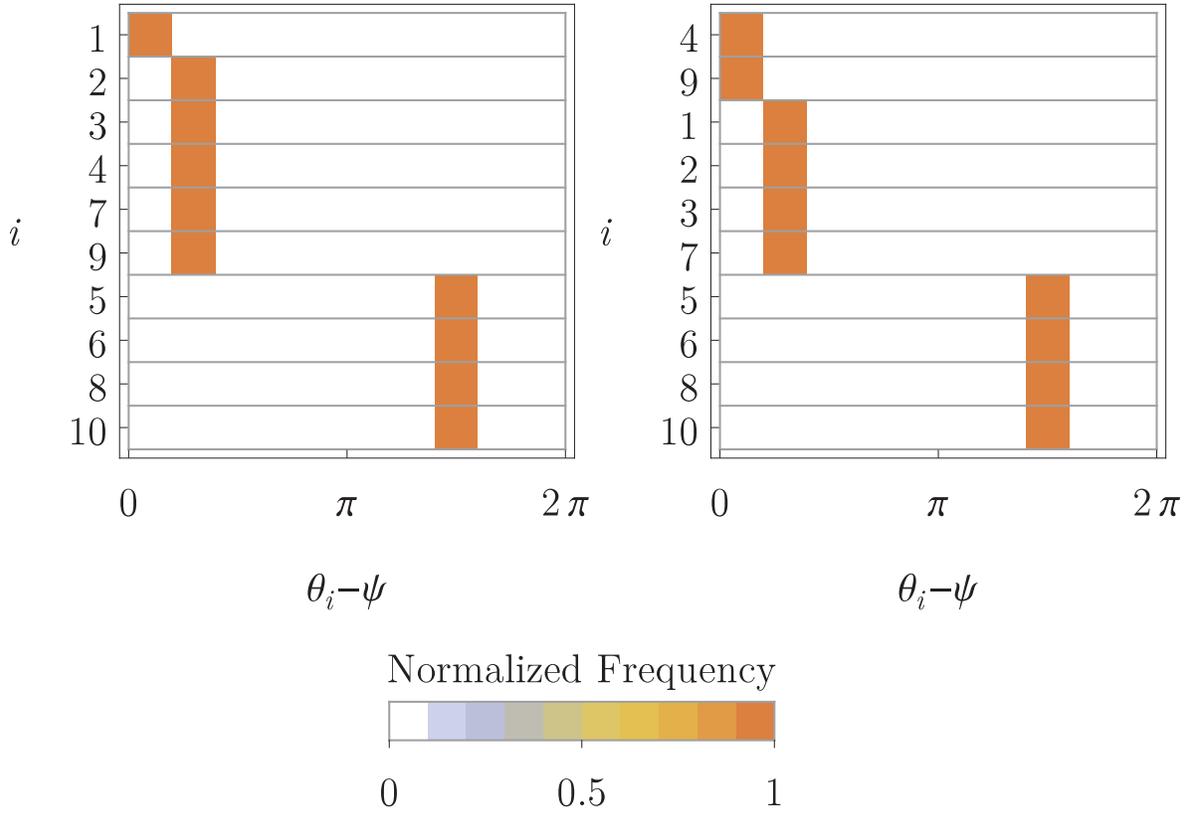}
\caption{Example multi-stability with group formation. 
Details of two trajectories from Fig.  
\ref{fig:2} (c) $\eps=0.15$,  the left picture corresponds to the solid black
line and 
the right one to the solid blue.
Histograms of $\theta_i - \psi$ are presented with color code representing the
normalized frequency.}
\label{fig:detailC}
\end{figure}

Other types of multistabilities appear for instance at $\eps=0.28$ and
$\eps=0.35$, 
as illustrate in Figs. \ref{fig:2} (d,e). For $\eps=0.28$ (panel d) some initial conditions 
do no converge to a fixed phase synchronization, but to a regime where the order
parameter 
$R$ is periodic in time.
For $\eps=0.35$ (panel c), the norm of the order parameter of all trajectories
simulated becomes periodic.
Fig. \ref{fig:detailD} provides an illustration of this regime. 

\begin{figure}
\centering
\includegraphics[scale=1]{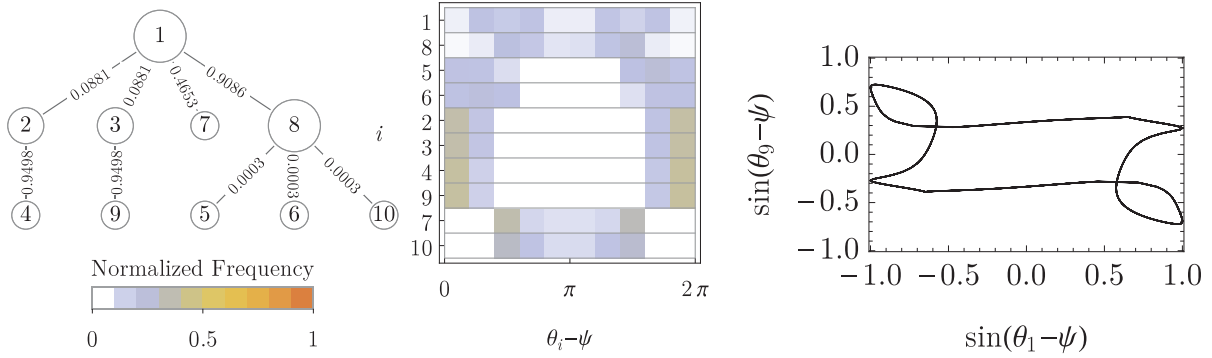}
\caption{Example of periodic norm of the order parameter.
Details of one of the trajectories from Fig.  
\ref{fig:2} (d) $\eps=0.28$. On the left side, 
the coupling graph with $s\C{i,j}$ in its edges is shown. On the right side, a
histogram of $\theta_i - \psi$ is presented with color code representing the
normalized frequency. We denote by $\psi \C{t}$ the argument of the order
parameter. The bottom picture shows that the curve $\C{  \sin \C{\theta_1 \C{t}
- \psi \C{t}}, \sin \C{\theta_9 \C{t} - \psi \C{t}}}$ is closed.
}
\label{fig:detailD}
\end{figure}

Finally, for $\eps=0.70$ (Fig. \ref{fig:2} (f)), one observes a chaotic state
with $\lambda_{\max} > 0$, the distribution of phases and frequencies is
illustrated 
in Fig.~\ref{fig:detailF}.

If $\eps \in \C[1]{1,1.5}$, we also obtained multistability, with the
coexistence of solutions converging to phase-lock and irregular order parameter
after the transient, similar to Fig. \ref{fig:2}(d).

\begin{figure}
\centering
\includegraphics[scale=1]{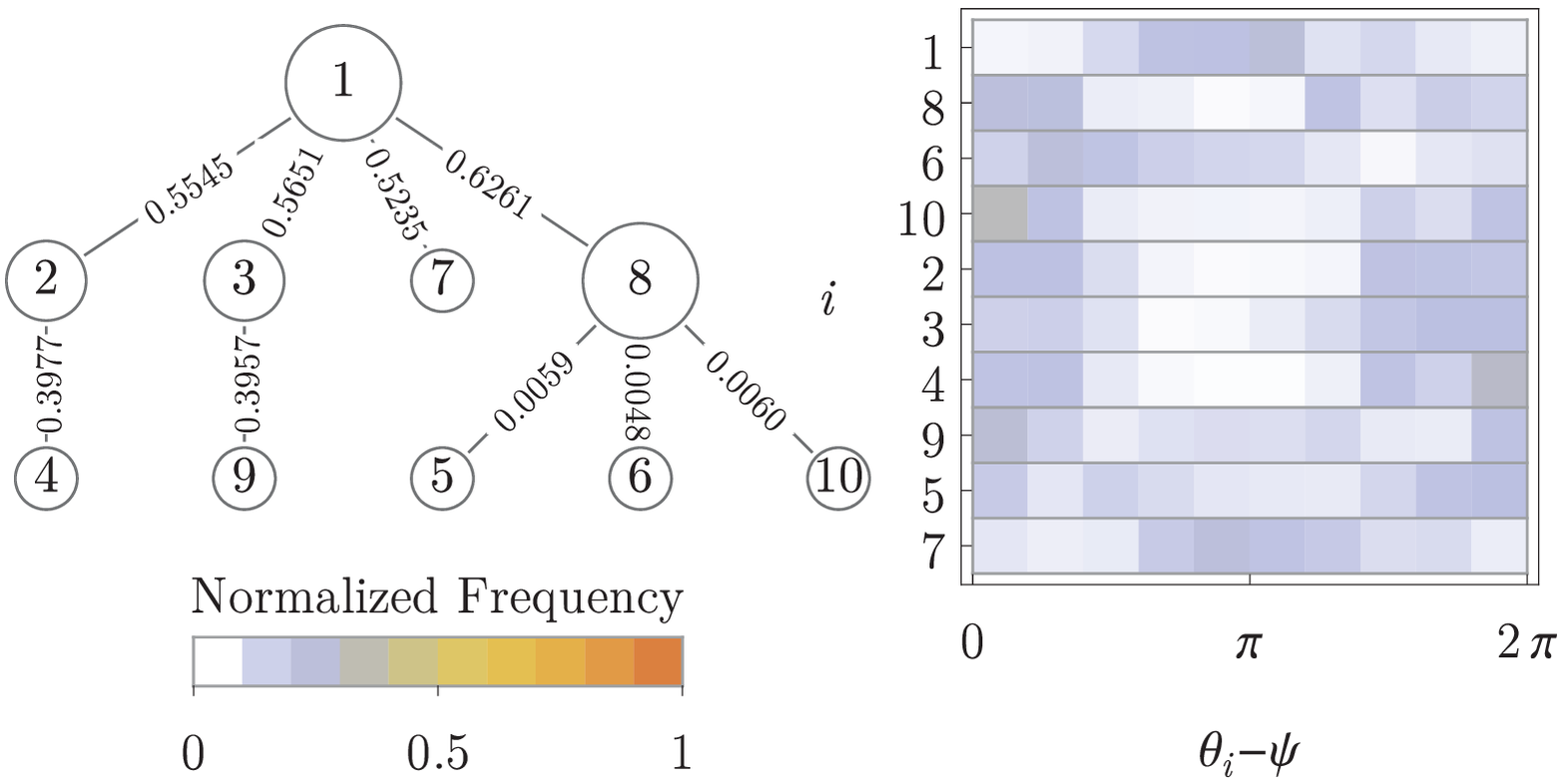}
\caption{Example of trajectory with $\lambda_{\max}>0$.
Details of one of the trajectories from Fig.  
\ref{fig:2} (f) $\eps=0.70$. On the left side, 
the coupling graph with $s\C{i,j}$ in its edges is shown. On the right side, a
histogram of $\theta_i - \psi$ is presented with color code representing the
normalized frequency.}
\label{fig:detailF}
\end{figure}

Now, we compare the results for two slightly different
networks depicted in panels (a) and (b) in Fig. \ref{fig:1}. The interval 
of values of $\eps$ with fixed phase synchronization for all initial conditions
simulated is
very similar for both networks, namely $\eps_c<\eps\lesssim 0.25$; 
also multistability of static partial synchronous regimes have been observed in both cases.

When $\eps \in \C[1]{1,1.5}$, contrary to case (a), we observed that the
solution for all initial conditions converged to the same phase-lock regime,
similar to Fig. \ref{fig:2} (b).

In conclusion of this section, in Fig. \ref{fig:OtherExperiments} we show
simulation results for 
two other networks.  Panel (a) shows a random network with $N=10$ nodes and $20$
undirected 
edges. Here predominantly static regimes are observed, only in small ranges of
coupling
constant chaos with a positive Lyapunov exponent appears. Static regimes,
however, demonstrate a 
large degree of multistability. 
In panel (b) we show a scale-free network with 
$N=50$ nodes and $100$ undirected edges. Here static states are rare, typically
irregular regimes with low
values of the order parameter are observed.

\begin{figure}
\centering
\includegraphics[scale=1]{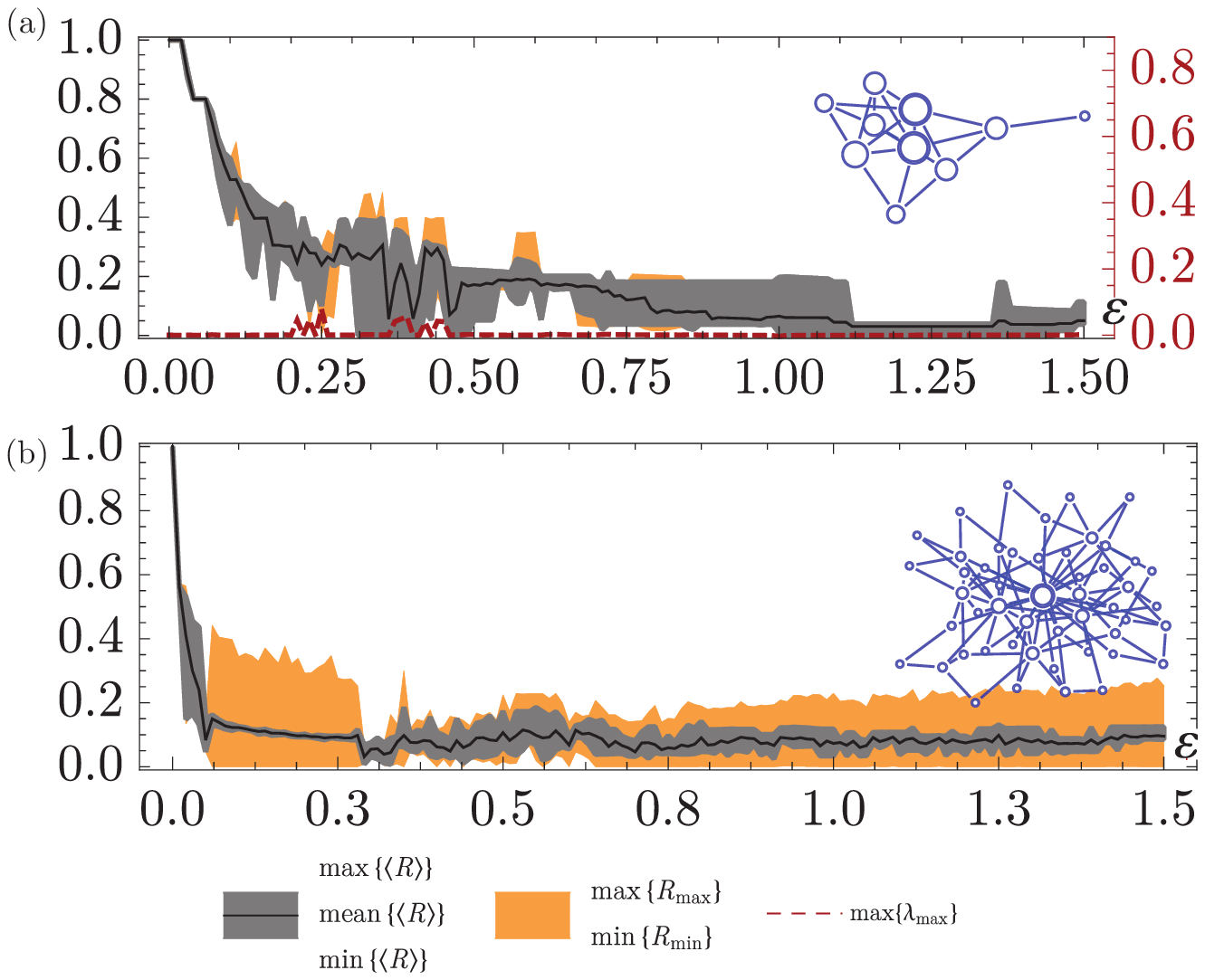}
\caption{Numerical results for Model \refeq{e1} as a function of $\eps$, for the
coupling graphs depicted as insect, including 10 random initial conditions. The
legend of the pictures is the same as in Fig. \ref{fig:1}}.
\label{fig:OtherExperiments}
\end{figure}

\subsection{Dependence of partial synchronization regimes on network structure}
\label{sCorrelation}

We have seen that partially synchronous states can be rather different even 
for similar networks. It is therefore difficult to make general predictions for
a relation between the network properties and the dynamical behaviors.
Here we attempt such a description, focusing on the property of abundance
of static regimes in comparison to time-dependent ones. For this purpose
we define the \emph{convergence index} $I_{c}$ as the ratio of values of $\eps
\in \C[1]{0, 1.5}$ 
such that $R$ converges to a constant value, 
considering  all the $10$ random initial conditions. So, both
networks Fig.~\ref{fig:1} have similar values of the index: 
$I_c \approx 0.530$ 
in case (a) while $I_c \approx 0.549$ in case (b). In contradistinction,
network shown in Fig.~\ref{fig:OtherExperiments}(a)  has very large value of the
index $I_c\approx 0.946$,
while that in Fig.~\ref{fig:OtherExperiments}(b) a rather low value $I_c\approx
0.064$.

In order to explore which features of the coupling graph are related with
$I_c$, 
we performed numerical experiments with three sets of graphs, with $N=10,50,100$
nodes. 
Each set consists in three common types of networks, each one with $10$ members,
generated as: (i)
random 
(Erd\"{o}s-R\'{e}nyi) graphs with $2N$ edges;
(ii) scale-free graphs, also with $2N$; and (iii)
tree graphs ($N$ edges). The Barab\'{a}si-Albert 
algorithm is applied for the last two types of networks (ii),(iii), with an
initial clique of $m_0$ 
nodes and with other nodes been connected to $m$ existing ones. For the
$2N$-edges scale-free 
graphs, we fixed $m_0=5$ and $m=2$; while for the tree graphs ($N$ edges
scale-free graphs), 
$m_0=m=1$. We point out that all graphs created are connected and symmetrical.
Additionally, three sets of $10$ initial conditions
$\theta_0 \in \RR^{N}$, with uniform distribution over $\C[1]{0,2\pi}$ and
$N=10,50,100$, 
have been explored. So, for each of the $90$ coupling graphs we computed its 
correspondent $I_c$ values by numerical integration of model~\refeq{e1} for
$\eps = 0,0.01,\ldots,1.49,1.50$.

In Table \ref{tab:meanIC}  we report the mean value and the standard 
deviation of $I_c$ for each topology and size of coupling graph.
From these data we see that the mean value of $I_c$ increases if we go from tree
to 
scale-free and to random graphs, respectively. However, these difference
becomes 
less noticeable for larger values of $N$. Both the mean value and the standard
deviation 
of $I_c$ decrease with larger networks. 

\begin{table}[!ht]
  \begin{center}
		\begin{tabular}{|c |c|c| c|}
		\hline Network & $N=10$  &  $N=50$  &  $N=100$  \\
		\hline Tree & 0.421 (0.260) &  0.016 (0.006) &  0.008 (0.004) \\
		\hline Scale-free & 0.857 (0.029) & 0.050 (0.015)&  0.013
(0.003) \\
		\hline Random & 0.872 (0.090) &  0.183 (0.063)& 0.077  (0.022)
\\
		\hline 
		\end{tabular} 
    \caption{Mean value of $I_c$ and its standard deviation (in brackets)
for each network type and size simulated.}
    \label{tab:meanIC}
  \end{center}
\end{table}

We have explored different networks metrics, searching for one mostly correlated
with the convergence index $I_c$. 
Let $0=\gamma_1 < \gamma_2 \le \ldots \gamma_N$ denote the Laplacian eigenvalues
of the coupling graph \cite{Godsil2001}. Recall that this graph is assumed to be
simple and connected.
We stress that these eigenvalues express fundamental characteristics of the
graph. 
For instance, $\gamma_2$ is related with graph diameter and $\gamma_N$ with its
largest degree size. 

We found that the quantity $\gamma^*$, 
defined as the ratio between the maximum eigenvalue and the average of 
the non-trivial eigenvalues of the Laplacian matrix of the graph, is rather
suitable for this purpose. Formally, it is defined as
\[
\gamma^* := 
\gamma_N
\C{ 
\frac{1}{N-1} \Sum_{k=2}^{N-1} \gamma_k
}^{-1}
.
\]

\begin{figure}
\centering
\includegraphics[scale=1]{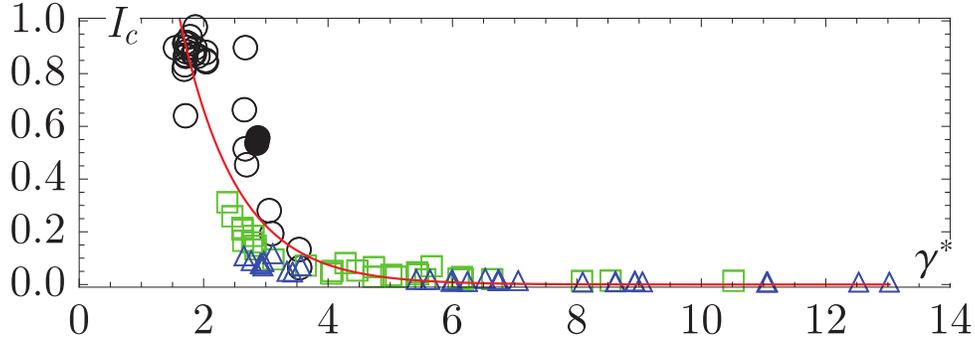}
\caption{Convergence index $I_c$ versus $\gamma^*$. Networks with $N=10,50,100$
nodes are represented as circles, squares and triangles, respectively. The two
experiments from Fig. \ref{fig:1} are shown as disks. We show in red an
exponential fit $f \C{x} = \ee^{1.7676 - 1.0894x}$ for the data.}
\label{fCorrelation}
\end{figure}

In Fig. \ref{fCorrelation} a correlation plot between $I_c$ and measure $\gamma^*$ for the correspondent graph is presented. From there, we observe a clear trend indicating that the greater the value of $\gamma^*$ is, the smaller is the value of $I_c$. Independently of the network type and size, static regimes of partial synchronization, full synchronization and phase-lock, are typical for values of $\gamma^*\lesssim 3$, like in the experiments from Fig. \ref{fig:1}. On the other hand, graphs with larger values of this measure yields more irregular dynamics, like time-dependent periodic and chaotic regimes, as the ones from Fig. \ref{fig:OtherExperiments}.

%\clearpage
\section{Conclusion}

In this work we introduced and studied a Kuramoto-like model of identical oscillators with non-linear coupling. Our main parameter was $\eps$, which governs the coupling nonlinearity strength. It is clear that the most influence of nonlinearity in the coupling is on the hubs which experience strong forcing from many connected oscillators, while less connected nodes may still operate in a linear-coupling regime. 

We proved that if this parameter is smaller than the inverse of the square of the maximum vertex degree in the network, then the full synchronized state is stable. Via numerical experiments, we showed that our model can display a variety of other qualitative behaviors of partial synchronization, like stationary phase locking, multistability, periodic order parameter variations, and chaotic regimes. We explored the relative abundance of stationary phase locking regimes under different network topologies. Statistical analysis performed suggests that tree graphs are much less likely to exhibit stationary phase locking in comparison with scale-free or random networks. In addition, this type of behavior becomes more rare if we increase network sizes, irrespective to the network topology. Finally, we also found a good correlation between the ration between the maximum eigenvalue and the average of the non-trivial eigenvalues of the Laplacian matrix of the graph, and the proportion of the repulsion parameter values which yield stationary phase locking. The greater this measure is, the smaller tend to be presence of stationary phase locking states in the system.

As a future research, we want to investigate analytical conditions and correlations involving other graph measures related to other forms of synchronization in the model.

\begin{acknowledgments}
We would like to thank the Coordena\c{c}\~ao de Aperfei\c{c}oamento
de Pessoal de N\'ivel Superior - CAPES (Process: BEX
10571/13-2) for financial support. AP V.V. thanks the IRTG 1740/TRP 2011/50151-0, funded by the DFG/FAPESP, CNPq, and the grant/agreement 02.B.49.21.0003 of August 27, 2013 between the Russian Ministry of Education and Science and Lobachevsky State University of Nizhni Novgorod.
\end{acknowledgments}

% Create the reference section using BibTeX:
\bibliography{References}

%merlin.mbs apsrev4-1.bst 2010-07-25 4.21a (PWD, AO, DPC) hacked
%Control: key (0)
%Control: author (8) initials jnrlst
%Control: editor formatted (1) identically to author
%Control: production of article title (-1) disabled
%Control: page (0) single
%Control: year (1) truncated
%Control: production of eprint (0) enabled
\begin{thebibliography}{28}%
\makeatletter
\providecommand \@ifxundefined [1]{%
 \@ifx{#1\undefined}
}%
\providecommand \@ifnum [1]{%
 \ifnum #1\expandafter \@firstoftwo
 \else \expandafter \@secondoftwo
 \fi
}%
\providecommand \@ifx [1]{%
 \ifx #1\expandafter \@firstoftwo
 \else \expandafter \@secondoftwo
 \fi
}%
\providecommand \natexlab [1]{#1}%
\providecommand \enquote  [1]{``#1''}%
\providecommand \bibnamefont  [1]{#1}%
\providecommand \bibfnamefont [1]{#1}%
\providecommand \citenamefont [1]{#1}%
\providecommand \href@noop [0]{\@secondoftwo}%
\providecommand \href [0]{\begingroup \@sanitize@url \@href}%
\providecommand \@href[1]{\@@startlink{#1}\@@href}%
\providecommand \@@href[1]{\endgroup#1\@@endlink}%
\providecommand \@sanitize@url [0]{\catcode `\\12\catcode `\$12\catcode
  `\&12\catcode `\#12\catcode `\^12\catcode `\_12\catcode `\%12\relax}%
\providecommand \@@startlink[1]{}%
\providecommand \@@endlink[0]{}%
\providecommand \url  [0]{\begingroup\@sanitize@url \@url }%
\providecommand \@url [1]{\endgroup\@href {#1}{\urlprefix }}%
\providecommand \urlprefix  [0]{URL }%
\providecommand \Eprint [0]{\href }%
\providecommand \doibase [0]{http://dx.doi.org/}%
\providecommand \selectlanguage [0]{\@gobble}%
\providecommand \bibinfo  [0]{\@secondoftwo}%
\providecommand \bibfield  [0]{\@secondoftwo}%
\providecommand \translation [1]{[#1]}%
\providecommand \BibitemOpen [0]{}%
\providecommand \bibitemStop [0]{}%
\providecommand \bibitemNoStop [0]{.\EOS\space}%
\providecommand \EOS [0]{\spacefactor3000\relax}%
\providecommand \BibitemShut  [1]{\csname bibitem#1\endcsname}%
\let\auto@bib@innerbib\@empty
%</preamble>
\bibitem [{\citenamefont {Kuramoto}(1975)}]{kuramoto75}%
  \BibitemOpen
  \bibfield  {author} {\bibinfo {author} {\bibfnamefont {Y.}~\bibnamefont
  {Kuramoto}},\ }in\ \href {\doibase 10.1007/BFb0013365} {\emph {\bibinfo
  {booktitle} {International Symposium on Mathematical Problems in Theoretical
  Physics}}},\ \bibinfo {series} {Lecture Notes in Physics}, Vol.~\bibinfo
  {volume} {39},\ \bibinfo {editor} {edited by\ \bibinfo {editor}
  {\bibfnamefont {H.}~\bibnamefont {Araki}}}\ (\bibinfo  {publisher} {Springer
  Berlin Heidelberg},\ \bibinfo {year} {1975})\ pp.\ \bibinfo {pages}
  {420--422}\BibitemShut {NoStop}%
\bibitem [{\citenamefont {Acebr{\'o}n}\ \emph {et~al.}(2005)\citenamefont
  {Acebr{\'o}n}, \citenamefont {Bonilla}, \citenamefont {Vicente},
  \citenamefont {Ritort},\ and\ \citenamefont {Spigler}}]{Acebron-etal-05}%
  \BibitemOpen
  \bibfield  {author} {\bibinfo {author} {\bibfnamefont {J.~A.}\ \bibnamefont
  {Acebr{\'o}n}}, \bibinfo {author} {\bibfnamefont {L.~L.}\ \bibnamefont
  {Bonilla}}, \bibinfo {author} {\bibfnamefont {C.~J.~P.}\ \bibnamefont
  {Vicente}}, \bibinfo {author} {\bibfnamefont {F.}~\bibnamefont {Ritort}}, \
  and\ \bibinfo {author} {\bibfnamefont {R.}~\bibnamefont {Spigler}},\
  }\href@noop {} {\bibfield  {journal} {\bibinfo  {journal} {Rev. Mod. Phys.}\
  }\textbf {\bibinfo {volume} {77}},\ \bibinfo {pages} {137} (\bibinfo {year}
  {2005})}\BibitemShut {NoStop}%
\bibitem [{\citenamefont {Strogatz}(2000)}]{Strogatz2000}%
  \BibitemOpen
  \bibfield  {author} {\bibinfo {author} {\bibfnamefont {S.~H.}\ \bibnamefont
  {Strogatz}},\ }\href@noop {} {\bibfield  {journal} {\bibinfo  {journal}
  {Physica D: Nonlinear Phenomena}\ }\textbf {\bibinfo {volume} {143}},\
  \bibinfo {pages} {1} (\bibinfo {year} {2000})}\BibitemShut {NoStop}%
\bibitem [{\citenamefont {Follmann}\ \emph {et~al.}(2014)\citenamefont
  {Follmann}, \citenamefont {Macau}, \citenamefont {Rosa},\ and\ \citenamefont
  {Piqueira}}]{Follmann2014}%
  \BibitemOpen
  \bibfield  {author} {\bibinfo {author} {\bibfnamefont {R.}~\bibnamefont
  {Follmann}}, \bibinfo {author} {\bibfnamefont {E.}~\bibnamefont {Macau}},
  \bibinfo {author} {\bibfnamefont {E.}~\bibnamefont {Rosa}}, \ and\ \bibinfo
  {author} {\bibfnamefont {J.}~\bibnamefont {Piqueira}},\ }\href {\doibase
  10.1109/TNNLS.2014.2345572} {\bibfield  {journal} {\bibinfo  {journal}
  {Neural Networks and Learning Systems, IEEE Transactions on}\ }\textbf
  {\bibinfo {volume} {PP}},\ \bibinfo {pages} {1} (\bibinfo {year}
  {2014})}\BibitemShut {NoStop}%
\bibitem [{\citenamefont {Vassilieva}\ \emph {et~al.}(2011)\citenamefont
  {Vassilieva}, \citenamefont {Pinto}, \citenamefont {Acacio~de Barros},\ and\
  \citenamefont {Suppes}}]{Vassilieva2011}%
  \BibitemOpen
  \bibfield  {author} {\bibinfo {author} {\bibfnamefont {E.}~\bibnamefont
  {Vassilieva}}, \bibinfo {author} {\bibfnamefont {G.}~\bibnamefont {Pinto}},
  \bibinfo {author} {\bibfnamefont {J.}~\bibnamefont {Acacio~de Barros}}, \
  and\ \bibinfo {author} {\bibfnamefont {P.}~\bibnamefont {Suppes}},\
  }\href@noop {} {\bibfield  {journal} {\bibinfo  {journal} {Neural Networks,
  IEEE Transactions on}\ }\textbf {\bibinfo {volume} {22}},\ \bibinfo {pages}
  {84} (\bibinfo {year} {2011})}\BibitemShut {NoStop}%
\bibitem [{\citenamefont {D{\"o}rfler}\ \emph {et~al.}(2013)\citenamefont
  {D{\"o}rfler}, \citenamefont {Chertkov},\ and\ \citenamefont
  {Bullo}}]{Dorfler2013}%
  \BibitemOpen
  \bibfield  {author} {\bibinfo {author} {\bibfnamefont {F.}~\bibnamefont
  {D{\"o}rfler}}, \bibinfo {author} {\bibfnamefont {M.}~\bibnamefont
  {Chertkov}}, \ and\ \bibinfo {author} {\bibfnamefont {F.}~\bibnamefont
  {Bullo}},\ }\href@noop {} {\bibfield  {journal} {\bibinfo  {journal}
  {Proceedings of the National Academy of Sciences}\ }\textbf {\bibinfo
  {volume} {110}},\ \bibinfo {pages} {2005} (\bibinfo {year}
  {2013})}\BibitemShut {NoStop}%
\bibitem [{\citenamefont {Leonard}(2014)}]{Leonard2014}%
  \BibitemOpen
  \bibfield  {author} {\bibinfo {author} {\bibfnamefont {N.~E.}\ \bibnamefont
  {Leonard}},\ }\href@noop {} {\bibfield  {journal} {\bibinfo  {journal}
  {Annual Reviews in Control}\ }\textbf {\bibinfo {volume} {38}},\ \bibinfo
  {pages} {171} (\bibinfo {year} {2014})}\BibitemShut {NoStop}%
\bibitem [{\citenamefont {Sadilek}\ and\ \citenamefont
  {Thurner}(2014)}]{Sadilek2014}%
  \BibitemOpen
  \bibfield  {author} {\bibinfo {author} {\bibfnamefont {M.}~\bibnamefont
  {Sadilek}}\ and\ \bibinfo {author} {\bibfnamefont {S.}~\bibnamefont
  {Thurner}},\ }\href@noop {} {\bibfield  {journal} {\bibinfo  {journal} {arXiv
  preprint arXiv:1409.5352}\ } (\bibinfo {year} {2014})}\BibitemShut {NoStop}%
\bibitem [{\citenamefont {Pecora}\ and\ \citenamefont
  {Carroll}(1998)}]{Pecora1998}%
  \BibitemOpen
  \bibfield  {author} {\bibinfo {author} {\bibfnamefont {L.~M.}\ \bibnamefont
  {Pecora}}\ and\ \bibinfo {author} {\bibfnamefont {T.~L.}\ \bibnamefont
  {Carroll}},\ }\href@noop {} {\bibfield  {journal} {\bibinfo  {journal}
  {Physical Review Letters}\ }\textbf {\bibinfo {volume} {80}},\ \bibinfo
  {pages} {2109} (\bibinfo {year} {1998})}\BibitemShut {NoStop}%
\bibitem [{\citenamefont {Jadbabaie}\ \emph {et~al.}(2004)\citenamefont
  {Jadbabaie}, \citenamefont {Motee},\ and\ \citenamefont {Barahona}}]{Jad04}%
  \BibitemOpen
  \bibfield  {author} {\bibinfo {author} {\bibfnamefont {A.}~\bibnamefont
  {Jadbabaie}}, \bibinfo {author} {\bibfnamefont {N.}~\bibnamefont {Motee}}, \
  and\ \bibinfo {author} {\bibfnamefont {M.}~\bibnamefont {Barahona}},\ }in\
  \href@noop {} {\emph {\bibinfo {booktitle} {American Control Conference,
  2004. Proceedings of the 2004}}},\ Vol.~\bibinfo {volume} {5}\ (\bibinfo
  {year} {2004})\ pp.\ \bibinfo {pages} {4296--4301 vol.5}\BibitemShut
  {NoStop}%
\bibitem [{\citenamefont {Franci}\ \emph {et~al.}(2010)\citenamefont {Franci},
  \citenamefont {Chaillet},\ and\ \citenamefont
  {Pasillas-L{\'e}pine}}]{Franci2010}%
  \BibitemOpen
  \bibfield  {author} {\bibinfo {author} {\bibfnamefont {A.}~\bibnamefont
  {Franci}}, \bibinfo {author} {\bibfnamefont {A.}~\bibnamefont {Chaillet}}, \
  and\ \bibinfo {author} {\bibfnamefont {W.}~\bibnamefont
  {Pasillas-L{\'e}pine}},\ }in\ \href@noop {} {\emph {\bibinfo {booktitle}
  {Decision and Control (CDC), 2010 49th IEEE Conference on}}}\ (\bibinfo
  {organization} {IEEE},\ \bibinfo {year} {2010})\ pp.\ \bibinfo {pages}
  {1587--1592}\BibitemShut {NoStop}%
\bibitem [{\citenamefont {Papachristodoulou}\ \emph {et~al.}(2010)\citenamefont
  {Papachristodoulou}, \citenamefont {Jadbabaie},\ and\ \citenamefont
  {Munz}}]{Papachristodoulou2010}%
  \BibitemOpen
  \bibfield  {author} {\bibinfo {author} {\bibfnamefont {A.}~\bibnamefont
  {Papachristodoulou}}, \bibinfo {author} {\bibfnamefont {A.}~\bibnamefont
  {Jadbabaie}}, \ and\ \bibinfo {author} {\bibfnamefont {U.}~\bibnamefont
  {Munz}},\ }\href@noop {} {\bibfield  {journal} {\bibinfo  {journal}
  {Automatic Control, IEEE Transactions on}\ }\textbf {\bibinfo {volume}
  {55}},\ \bibinfo {pages} {1471} (\bibinfo {year} {2010})}\BibitemShut
  {NoStop}%
\bibitem [{\citenamefont {Pecora}\ \emph {et~al.}(2013)\citenamefont {Pecora},
  \citenamefont {Sorrentino}, \citenamefont {Hagerstrom}, \citenamefont
  {Murphy},\ and\ \citenamefont {Roy}}]{Pecora2013}%
  \BibitemOpen
  \bibfield  {author} {\bibinfo {author} {\bibfnamefont {L.~M.}\ \bibnamefont
  {Pecora}}, \bibinfo {author} {\bibfnamefont {F.}~\bibnamefont {Sorrentino}},
  \bibinfo {author} {\bibfnamefont {A.~M.}\ \bibnamefont {Hagerstrom}},
  \bibinfo {author} {\bibfnamefont {T.~E.}\ \bibnamefont {Murphy}}, \ and\
  \bibinfo {author} {\bibfnamefont {R.}~\bibnamefont {Roy}},\ }\href@noop {}
  {\bibfield  {journal} {\bibinfo  {journal} {arXiv preprint arXiv:1309.6605}\
  } (\bibinfo {year} {2013})}\BibitemShut {NoStop}%
\bibitem [{\citenamefont {Tomov}\ and\ \citenamefont {Zaks}(2014)}]{Petar2014}%
  \BibitemOpen
  \bibfield  {author} {\bibinfo {author} {\bibfnamefont {P.}~\bibnamefont
  {Tomov}}\ and\ \bibinfo {author} {\bibfnamefont {M.}~\bibnamefont {Zaks}},\
  }in\ \href@noop {} {\emph {\bibinfo {booktitle} {Nonlinear Dynamics of
  Electronic Systems}}}\ (\bibinfo  {publisher} {Springer},\ \bibinfo {year}
  {2014})\ pp.\ \bibinfo {pages} {246--253}\BibitemShut {NoStop}%
\bibitem [{\citenamefont {Tsimring}\ \emph {et~al.}(2005)\citenamefont
  {Tsimring}, \citenamefont {Rulkov}, \citenamefont {Larsen},\ and\
  \citenamefont {Gabbay}}]{Tsimring2005}%
  \BibitemOpen
  \bibfield  {author} {\bibinfo {author} {\bibfnamefont {L.~S.}\ \bibnamefont
  {Tsimring}}, \bibinfo {author} {\bibfnamefont {N.~F.}\ \bibnamefont
  {Rulkov}}, \bibinfo {author} {\bibfnamefont {M.~L.}\ \bibnamefont {Larsen}},
  \ and\ \bibinfo {author} {\bibfnamefont {M.}~\bibnamefont {Gabbay}},\ }\href
  {\doibase 10.1103/PhysRevLett.95.014101} {\bibfield  {journal} {\bibinfo
  {journal} {Phys. Rev. Lett.}\ }\textbf {\bibinfo {volume} {95}},\ \bibinfo
  {pages} {014101} (\bibinfo {year} {2005})}\BibitemShut {NoStop}%
\bibitem [{\citenamefont {Filatrella}\ \emph {et~al.}(2007)\citenamefont
  {Filatrella}, \citenamefont {Pedersen},\ and\ \citenamefont
  {Wiesenfeld}}]{Filatrella-Pedersen-Wiesenfeld-07}%
  \BibitemOpen
  \bibfield  {author} {\bibinfo {author} {\bibfnamefont {G.}~\bibnamefont
  {Filatrella}}, \bibinfo {author} {\bibfnamefont {N.~F.}\ \bibnamefont
  {Pedersen}}, \ and\ \bibinfo {author} {\bibfnamefont {K.}~\bibnamefont
  {Wiesenfeld}},\ }\href@noop {} {\bibfield  {journal} {\bibinfo  {journal}
  {Phys. Rev. E}\ }\textbf {\bibinfo {volume} {75}},\ \bibinfo {pages} {017201}
  (\bibinfo {year} {2007})}\BibitemShut {NoStop}%
\bibitem [{\citenamefont {Rosenblum}\ and\ \citenamefont
  {Pikovsky}(2007)}]{Rosenblum-Pikovsky-07}%
  \BibitemOpen
  \bibfield  {author} {\bibinfo {author} {\bibfnamefont {M.}~\bibnamefont
  {Rosenblum}}\ and\ \bibinfo {author} {\bibfnamefont {A.}~\bibnamefont
  {Pikovsky}},\ }\href@noop {} {\bibfield  {journal} {\bibinfo  {journal}
  {Phys. Rev. Lett.}\ }\textbf {\bibinfo {volume} {98}},\ \bibinfo {pages}
  {064101} (\bibinfo {year} {2007})}\BibitemShut {NoStop}%
\bibitem [{\citenamefont {Pikovsky}\ and\ \citenamefont
  {Rosenblum}(2009)}]{pik2009}%
  \BibitemOpen
  \bibfield  {author} {\bibinfo {author} {\bibfnamefont {A.}~\bibnamefont
  {Pikovsky}}\ and\ \bibinfo {author} {\bibfnamefont {M.}~\bibnamefont
  {Rosenblum}},\ }\href@noop {} {\bibfield  {journal} {\bibinfo  {journal}
  {Physica D: Nonlinear Phenomena}\ }\textbf {\bibinfo {volume} {238}},\
  \bibinfo {pages} {27} (\bibinfo {year} {2009})}\BibitemShut {NoStop}%
\bibitem [{\citenamefont {Baibolatov}\ \emph {et~al.}(2009)\citenamefont
  {Baibolatov}, \citenamefont {Rosenblum}, \citenamefont {Zhanabaev},
  \citenamefont {Kyzgarina},\ and\ \citenamefont {Pikovsky}}]{Baibolatov2009}%
  \BibitemOpen
  \bibfield  {author} {\bibinfo {author} {\bibfnamefont {Y.}~\bibnamefont
  {Baibolatov}}, \bibinfo {author} {\bibfnamefont {M.}~\bibnamefont
  {Rosenblum}}, \bibinfo {author} {\bibfnamefont {Z.~Z.}\ \bibnamefont
  {Zhanabaev}}, \bibinfo {author} {\bibfnamefont {M.}~\bibnamefont
  {Kyzgarina}}, \ and\ \bibinfo {author} {\bibfnamefont {A.}~\bibnamefont
  {Pikovsky}},\ }\href@noop {} {\bibfield  {journal} {\bibinfo  {journal}
  {Phys. Rev. E}\ }\textbf {\bibinfo {volume} {80}},\ \bibinfo {pages} {046211}
  (\bibinfo {year} {2009})}\BibitemShut {NoStop}%
\bibitem [{\citenamefont {Temirbayev}\ \emph {et~al.}(2012)\citenamefont
  {Temirbayev}, \citenamefont {Zhanabaev}, \citenamefont {Tarasov},
  \citenamefont {Ponomarenko},\ and\ \citenamefont
  {Rosenblum}}]{Temirbayev_etal-12}%
  \BibitemOpen
  \bibfield  {author} {\bibinfo {author} {\bibfnamefont {A.~A.}\ \bibnamefont
  {Temirbayev}}, \bibinfo {author} {\bibfnamefont {Z.~Z.}\ \bibnamefont
  {Zhanabaev}}, \bibinfo {author} {\bibfnamefont {S.~B.}\ \bibnamefont
  {Tarasov}}, \bibinfo {author} {\bibfnamefont {V.~I.}\ \bibnamefont
  {Ponomarenko}}, \ and\ \bibinfo {author} {\bibfnamefont {M.}~\bibnamefont
  {Rosenblum}},\ }\href@noop {} {\bibfield  {journal} {\bibinfo  {journal}
  {Phys. Rev. E}\ }\textbf {\bibinfo {volume} {85}},\ \bibinfo {pages} {015204}
  (\bibinfo {year} {2012})}\BibitemShut {NoStop}%
\bibitem [{\citenamefont {Temirbayev}\ \emph {et~al.}(2013)\citenamefont
  {Temirbayev}, \citenamefont {Nalibayev}, \citenamefont {Zhanabaev},
  \citenamefont {Ponomarenko},\ and\ \citenamefont
  {Rosenblum}}]{Temirbayev_etal-13}%
  \BibitemOpen
  \bibfield  {author} {\bibinfo {author} {\bibfnamefont {A.~A.}\ \bibnamefont
  {Temirbayev}}, \bibinfo {author} {\bibfnamefont {Y.~D.}\ \bibnamefont
  {Nalibayev}}, \bibinfo {author} {\bibfnamefont {Z.~Z.}\ \bibnamefont
  {Zhanabaev}}, \bibinfo {author} {\bibfnamefont {V.~I.}\ \bibnamefont
  {Ponomarenko}}, \ and\ \bibinfo {author} {\bibfnamefont {M.}~\bibnamefont
  {Rosenblum}},\ }\href@noop {} {\bibfield  {journal} {\bibinfo  {journal}
  {Phys. Rev. E}\ }\textbf {\bibinfo {volume} {87}},\ \bibinfo {pages} {062917}
  (\bibinfo {year} {2013})}\BibitemShut {NoStop}%
\bibitem [{\citenamefont {Hong}\ and\ \citenamefont
  {Strogatz}(2011)}]{Hong-Strogatz-11}%
  \BibitemOpen
  \bibfield  {author} {\bibinfo {author} {\bibfnamefont {H.}~\bibnamefont
  {Hong}}\ and\ \bibinfo {author} {\bibfnamefont {S.~H.}\ \bibnamefont
  {Strogatz}},\ }\href {\doibase 10.1103/PhysRevLett.106.054102} {\bibfield
  {journal} {\bibinfo  {journal} {Phys. Rev. Lett.}\ }\textbf {\bibinfo
  {volume} {106}},\ \bibinfo {pages} {054102} (\bibinfo {year}
  {2011})}\BibitemShut {NoStop}%
\bibitem [{Note1()}]{Note1}%
  \BibitemOpen
  \bibinfo {note} {On the other hand, the minimum value of $r^2$ does not
  necessarily correspond to $R^2=0$.}\BibitemShut {Stop}%
\bibitem [{\citenamefont {{Godsil}}\ and\ \citenamefont
  {{Royle}}(2001)}]{Godsil01}%
  \BibitemOpen
  \bibfield  {author} {\bibinfo {author} {\bibfnamefont {C.}~\bibnamefont
  {{Godsil}}}\ and\ \bibinfo {author} {\bibfnamefont {G.}~\bibnamefont
  {{Royle}}},\ }\href@noop {} {\emph {\bibinfo {title} {Algebraic Graph
  Theory}}},\ \bibinfo {series} {Graduate Texts in Mathematics.}, Vol.\
  \bibinfo {volume} {207}\ (\bibinfo  {publisher} {volume 207 of Graduate Texts
  in Mathematics. Springer},\ \bibinfo {year} {2001})\BibitemShut {NoStop}%
\bibitem [{\citenamefont {Burden}\ and\ \citenamefont {Faires}(2010)}]{burden}%
  \BibitemOpen
  \bibfield  {author} {\bibinfo {author} {\bibfnamefont {R.~L.}\ \bibnamefont
  {Burden}}\ and\ \bibinfo {author} {\bibfnamefont {J.~D.}\ \bibnamefont
  {Faires}},\ }\href@noop {} {\emph {\bibinfo {title} {Numerical Analysis}}}\
  (\bibinfo  {publisher} {Brooks Cole},\ \bibinfo {year} {2010})\BibitemShut
  {NoStop}%
\bibitem [{\citenamefont {G\'omez-Garde{\~n}es}\ \emph
  {et~al.}(2007)\citenamefont {G\'omez-Garde{\~n}es}, \citenamefont {Moreno},\
  and\ \citenamefont {Arenas}}]{Gardenes}%
  \BibitemOpen
  \bibfield  {author} {\bibinfo {author} {\bibfnamefont {J.}~\bibnamefont
  {G\'omez-Garde{\~n}es}}, \bibinfo {author} {\bibfnamefont {Y.}~\bibnamefont
  {Moreno}}, \ and\ \bibinfo {author} {\bibfnamefont {A.}~\bibnamefont
  {Arenas}},\ }\href {\doibase 10.1103/PhysRevLett.98.034101} {\bibfield
  {journal} {\bibinfo  {journal} {Phys. Rev. Lett.}\ }\textbf {\bibinfo
  {volume} {98}},\ \bibinfo {pages} {034101} (\bibinfo {year}
  {2007})}\BibitemShut {NoStop}%
\bibitem [{\citenamefont {Alligood}\ \emph {et~al.}(1997)\citenamefont
  {Alligood}, \citenamefont {Sauer},\ and\ \citenamefont
  {Yorke}}]{alligood1997chaos}%
  \BibitemOpen
  \bibfield  {author} {\bibinfo {author} {\bibfnamefont {K.}~\bibnamefont
  {Alligood}}, \bibinfo {author} {\bibfnamefont {T.}~\bibnamefont {Sauer}}, \
  and\ \bibinfo {author} {\bibfnamefont {J.}~\bibnamefont {Yorke}},\ }\href
  {http://books.google.com.br/books?id=48YHnbHGZAgC} {\emph {\bibinfo {title}
  {Chaos: An Introduction to Dynamical Systems}}},\ Chaos: An Introduction to
  Dynamical Systems\ (\bibinfo  {publisher} {New York, NY},\ \bibinfo {year}
  {1997})\BibitemShut {NoStop}%
\bibitem [{\citenamefont {Godsil}\ \emph {et~al.}(2001)\citenamefont {Godsil},
  \citenamefont {Royle},\ and\ \citenamefont {Godsil}}]{Godsil2001}%
  \BibitemOpen
  \bibfield  {author} {\bibinfo {author} {\bibfnamefont {C.~D.}\ \bibnamefont
  {Godsil}}, \bibinfo {author} {\bibfnamefont {G.}~\bibnamefont {Royle}}, \
  and\ \bibinfo {author} {\bibfnamefont {C.}~\bibnamefont {Godsil}},\
  }\href@noop {} {\emph {\bibinfo {title} {Algebraic graph theory}}},\ Vol.\
  \bibinfo {volume} {207}\ (\bibinfo  {publisher} {Springer New York},\
  \bibinfo {year} {2001})\BibitemShut {NoStop}%
\end{thebibliography}%

\end{document}